\newcommand{\norm}[1]{\left\lVert#1\right\rVert}
\newtheorem{lm}{Lemma}
\newtheorem{rmk}{Remark}
\newtheorem{thm}{Theorem}
\title{\LARGE \bf
Error Bounds for Kernel-Based Linear System Identification with Unknown Hyperparameters
}
\author{Mingzhou Yin, Roy S. Smith
\thanks{This work was supported by the Swiss National Science Foundation under Grant 200021\_178890.}
\thanks{The authors are with the Automatic Control Laboratory, Swiss Federal Institute of Technology (ETH Zurich), Physikstrasse 3, 8092 Zurich, Switzerland, {\tt\small \{myin,rsmith\}} {\tt\small @control.ee.ethz.ch}.}%
}
\begin{document}

\maketitle
\thispagestyle{empty}
\pagestyle{empty}

\begin{abstract}
The kernel-based method has been successfully applied in linear system identification using stable kernel designs. From a Gaussian process perspective, it automatically provides probabilistic error bounds for the identified models from the posterior covariance, which are useful in robust and stochastic control. However, the error bounds require knowledge of the true hyperparameters in the kernel design and are demonstrated to be inaccurate with estimated hyperparameters for lightly damped systems or in the presence of high noise. In this work, we provide reliable quantification of the estimation error when the hyperparameters are unknown. The bounds are obtained by first constructing a high-probability set for the true hyperparameters from the marginal likelihood function and then finding the worst-case posterior covariance within the set. The proposed bound is proven to contain the true model with a high probability and its validity is verified in numerical simulation.
\end{abstract}

\section{Introduction}
System identification estimates models of dynamical systems from input-output data. Under the assumption that a low-dimensional model structure is known \textit{a priori}, the model parameters can be estimated by maximum likelihood estimation, of which one example is the prediction error method (PEM) \cite{Ljung_2002}. This framework, despite some numerical difficulties, has been very successful in various applications \cite{LjungBook2}.

However, as more complex and large-scale systems are emerging, low-dimensional model structures become less accessible. Following the seminal work \cite{PILLONETTO201081}, system identification can be reformulated as a non-parametric function learning problem, and solved using, among other approaches, the kernel-based method \cite{Chiuso_2019,Ljung2_2019,pillonetto2022regularized}. The kernel-based method can be interpreted as function learning in a reproducing kernel Hilbert space (RKHS), Gaussian process (GP) regression, or ridge regression with basis expansions. In linear system identification, a truncated impulse response model is identified with a weighted $l_2$ regularization term, with a class of stable kernels designed to identify stable linear systems effectively \cite{PILLONETTO201081,Pillonetto_2016,CHEN2018109}. The GP interpretation provides the kernel-based method with one of its main advantages: it obtains Gaussian stochastic models and thus high-probability error bounds simultaneously with the nominal model \cite{Chen_2012}. This enables its application in robust and stochastic control.

However, one often-neglected aspect of kernel-based identification is that the results, including the error bounds, are conditioned on correct hyperparameter selection, in the same way as PEM is conditioned on the correct model structure. The hyperparameters in the kernel-based method are usually selected separately using the maximum marginal likelihood method or cross-validation, and used in identification empirically with certainty equivalence. This makes the GP-based error bounds unreliable when the estimated hyperparameters are inaccurate, and thus detrimental to use in safety-critical applications. This phenomenon has been observed in machine learning literature \cite{10.7551/mitpress/3206.001.0001}. In linear system identification, it is demonstrated in this paper that the error bound derived from estimated hyperparameters can be inaccurate, especially for lightly damped systems and in low signal-to-noise ratio scenarios.

Therefore, error bounds for the kernel-based methods are needed in the case of unknown hyperparameters. In kernel-based linear system identification, \cite{Pillonetto_2022} establishes non-asymptotic bounds for all stable systems with bounded pole magnitudes. However, the bounds are too conservative and thus only useful for sample complexity analysis. Error bounds are also widely studied in GP literature, e.g., \cite{Maddalena_2021,Srinivas_2012,NEURIPS2019_fe73f687,pmlr-v70-chowdhury17a}, which are usually obtained by scaling posterior standard deviations. Such bounds are derived under the assumptions that the prior covariance function is exact and/or an upper bound of the RKHS-induced norm is known. However, both assumptions are impractical in general. Several works provide modified bounds considering the discrepancy between the estimated and the true hyperparameters \cite{pmlr-v162-capone22a,beckers2018mean,Fiedler_2021,10.5555/3455716.3455903}. These results depend on prior knowledge of the magnitude of the discrepancy: a set of possible kernels, bounds on the kernel function values, and Mat\'{e}rn kernel parameter bounds are required in \cite{beckers2018mean,Fiedler_2021,10.5555/3455716.3455903} respectively. Such information is estimated from data in \cite{pmlr-v162-capone22a} by investigating the maximum marginal likelihood problem in hyperparameter estimation. Unfortunately, these works all consider approximation problems instead of regression problems and often consider kernel classes that do not contain the typical stable kernels used in linear system identification.

In this work, we provide probabilistic error bounds for kernel-based linear system identification with no prior knowledge of the hyperparameters by extending \cite{pmlr-v162-capone22a} to regression problems and stable kernels. The proof in \cite{pmlr-v162-capone22a} is also simplified with an improved constant. This approach assumes the correct kernel structure and first estimates a high-probability set for the hyperparameters from the marginal likelihood function. Then, the upper bound of the posterior covariance is found within the range of hyperparameters. A uniform bound is obtained for diagonal and tuned/correlated kernels. For general kernels, element-wise bounds can be found by optimization. Finally, probabilistic error bounds are established by scaling the worst-case posterior standard deviations. Optimization problems to obtain the tightest error bounds are discussed as well. Numerical simulations confirm that the proposed error bounds correctly bound the estimation error with a high probability.

\textit{Notation.} A Gaussian distribution with mean $\mu$ and covariance $\Sigma$ is indicated by $\mathcal{N}(\mu,\Sigma)$. The expectation and the covariance of a random vector $x$ are denoted by $\mathbb{E}(x)$ and $\text{cov}(x)$, respectively. The symbol $\overset{p}{\leq}$ indicates less than or equal to with probability $p$. For a vector $x$, the weighted $l_2$-norm $(x^\top Px)^{\frac{1}{2}}$ is denoted by $\norm{x}_P$. The $(i,j)$-th element and the trace of a matrix $A$ are denoted by $A_{i,j}$ and $\text{tr}(A)$, respectively.

\section{The Kernel-Based Method in Linear System Identification}

\subsection{Problem Statement}
Consider a causal and stable linear time-invariant single-input single-output discrete-time system $y_t=G(q)u_t+v_t$, where $u_t$, $y_t$, $v_t$ are the inputs, outputs, and additive noise respectively, and $q$ is the shift operator. The additive noise is assumed to be zero-mean i.i.d. Gaussian with a variance of $\sigma^2$. An input-output sequence of the system
\begin{equation}
    \mathbf{u}=\left[u_1\ u_2\ \dots\ u_N\right]^\top\!,\  \mathbf{y}=\left[y_1\ y_2\ \dots\ y_N\right]^\top
\end{equation}
has been collected. We are interested in identifying the transfer function $G(q)$ from the data sequence $(\mathbf{u},\mathbf{y})$. In this work, we consider the finite impulse response model of $G(q)$: $G(q)=\sum_{l=0}^{n_g-1}g_l q^{-l}$, i.e., 
\begin{equation}
    y_t=\sum_{l=0}^{n_g-1}g_l u_{t-l} + v_t.
\end{equation}
This leads to the data equation:
\begin{equation}
\underbrace{
    \begin{bmatrix}
    y_1\\y_2\\\vdots\\y_N
    \end{bmatrix}}_{\mathbf{y}} = \underbrace{\begin{bmatrix}
    u_1&0&\cdots&0\\
    u_2&u_1&\cdots&0\\
    \vdots&\vdots&\ddots&\vdots\\
    u_N&u_{N-1}&\cdots&u_{N-n_g+1}\\
    \end{bmatrix}}_{\Phi}
    \underbrace{
    \begin{bmatrix}
    g_0\\g_1\\\vdots\\g_{n_g-1}
    \end{bmatrix}}_\mathbf{g} + 
    \underbrace{\begin{bmatrix}
    e_1\\e_2\\\vdots\\e_N
    \end{bmatrix}}_{\mathbf{e}}.
\end{equation}
Here, we assume that the system starts at rest before the data collection, i.e., $u_t=0$, $\forall t\leq 0$.

\subsection{The Kernel-Based Method}

If no prior knowledge is assumed for $G(q)$, the maximum likelihood estimator of $\mathbf{g}$ is given by the least-squares solution 
\begin{equation}
\hat{\mathbf{g}}^\text{LS}=\text{arg}\underset{\mathbf{g}}{\text{min}}\ \norm{\mathbf{y}-\Phi\,\mathbf{g}}_2^2=\left(\Phi^\top\Phi\right)^{-1}\Phi^\top \mathbf{y}.
\end{equation}
It is well known that the estimation error is also Gaussian with covariance $\text{cov}(\mathbf{g})=\sigma^2\left(\Phi^\top\Phi\right)^{-1}=:\Sigma^\text{LS}$. Element-wise stochastic error bounds can be obtained for $\hat{\mathbf{g}}^\text{LS}$ as
\begin{equation}
\mathbb{P}\left(\left|\hat{g}^\text{LS}_l-g_l\right|\leq \mu_\delta\sqrt{\Sigma^\text{LS}_{l,l}}\right)\geq 1-\delta,
\label{eqn:lsb}
\end{equation}
where 
\begin{equation}
    F_\mathcal{N}(\mu_\delta)\geq 1-\delta/2,
    \label{eqn:mud}
\end{equation}
$F_\mathcal{N}(\cdot)$ is the cumulative distribution function of the Gaussian distribution.
 
If the system is assumed to be stable with a low McMillan degree, the impulse response of the system will be smooth and exponentially converge to zero. This prior knowledge can be encoded as either a prior distribution in GP regression or an RKHS in kernel regression. In both cases, the estimator is given by the ridge-regularized least-squares solution 
\begin{align}
    \hat{\mathbf{g}}&=\text{arg}\underset{\mathbf{g}}{\text{min}}\ \norm{\mathbf{y}-\Phi\,\mathbf{g}}_2^2+\sigma^2\,\mathbf{g}^\top K^{-1}\mathbf{g}\\
    &=\left(\Phi^\top\Phi+\sigma^2 K^{-1}\right)^{-1}\Phi^\top \mathbf{y},
    \label{eqn:gcl}
\end{align}
with $K$ having different meanings.

In the GP regression interpretation, $K$ is the covariance of the prior distribution of $\mathbf{g}$: $\mathbf{g}\sim \mathcal{N}(\mathbf{0},K)$. Then $\mathbf{y}$ and $\mathbf{g}$ are jointly Gaussian:
\begin{equation}
    \begin{bmatrix}
        \mathbf{y}\\\mathbf{g}
    \end{bmatrix}\sim\mathcal{N}\left(
    \mathbf{0},\begin{bmatrix}
        K&K\Phi^\top\\
        \Phi K&\Phi K\Phi^\top + \sigma^2\mathbb{I}
    \end{bmatrix}
    \right).
\end{equation}
The estimate $\hat{\mathbf{g}}$ comes directly from the posterior mean of $\mathbf{g}$ given $\mathbf{y}$: $\mathbf{g}|\mathbf{y}\sim\mathcal{N}\left(\hat{\mathbf{g}},\Sigma\right)$, where $\Sigma=\sigma^2\left(\Phi^\top\Phi+\sigma^2 K^{-1}\right)^{-1}$ is the posterior covariance \cite{Chen_2012}. From the posterior covariance, the associated element-wise stochastic error bounds are
\begin{equation}
\mathbb{P}\left(\left|\hat{g}_l-g_l\right|\leq \mu_\delta\sqrt{\Sigma_{l,l}}\right)\geq 1-\delta,
\label{eqn:seb}
\end{equation}
conditioned on the prior distribution of $\mathbf{g}$.

In the kernel regression interpretation, the continuous-time impulse response function $g(t): [0,+\infty)\rightarrow\mathbb{R}$, $g(l)=g_l$, $l=0,\dots,n_g-1$ is identified by solving the regularized function learning problem within an RKHS $\mathcal{H}(k(\cdot,\cdot))$ associated with a kernel function $k(\cdot,\cdot):[0,+\infty)\times[0,+\infty)\rightarrow\mathbb{R}$:
\begin{equation}
    \begin{split}
    g^\star(\cdot)=\text{arg}\underset{g(\cdot)\in\mathcal{H}}{\text{min}}&\ \norm{\mathbf{y}-\Phi\,\mathbf{g}}_2^2+\sigma^2\norm{g(\cdot)}_\mathcal{H}^2\\
    \text{s.t.}&\ \ \mathbf{g}=\begin{bmatrix}
        g(0)\\\vdots\\g(n_g-1)
    \end{bmatrix},
    \end{split}
    \label{eqn:kernel}
\end{equation}
where $\norm{g(\cdot)}_\mathcal{H}$ is the induced norm of $g(\cdot)$ with respect to $\mathcal{H}$. From the representer theorem \cite{10.1007/3-540-44581-1_27}, the solution to \eqref{eqn:kernel} is $g^\star(x)=\mathbf{k}_x\left(\Phi^\top\Phi K+\sigma^2 \mathbb{I}\right)^{-1}\Phi^\top \mathbf{y}$ and $\mathbf{g}^\star=\hat{\mathbf{g}}$, where $K$ evaluates the kernel function associated with the RKHS $\mathcal{H}$ at $l=0,\dots,n_g-1$, i.e., $K_{l,l}=k(l,l)$, and $\mathbf{k}_x=\left[k(x,0)\ \dots\ k(x,n_g-1)\right]$. The induced norm of $g^\star$ is calculated as $\norm{g^\star(\cdot)}_\mathcal{H}^2=\hat{\mathbf{g}}^\top K^{-1}\hat{\mathbf{g}}$.

\subsection{Kernel Design and Hyperparameter Selection}

The matrix $K$ is critical to the performance of the kernel-based method. Extensive studies have been conducted to obtain appropriate structures of $K$ that promote impulse response estimates that are both smooth and exponentially converge to zero \cite{CHEN2018109}. The most commonly used kernels in linear system identification include:
\begin{enumerate}
    \item diagonal (DI): $K_{i,i}^\text{DI}(\eta)=c\lambda^i$, $K_{i,j}^\text{DI}=0$ for $i\neq j$,
    \item tuned/correlated (TC): $K_{i,j}^\text{TC}(\eta)=c\lambda^{\max(i,j)}$,
    \item stable spline (SS): $$K_{i,j}^\text{SS}(\eta)=c\lambda^{2\max(i,j)}\left(\tfrac{\lambda^{\min(i,j)}}{2}-\tfrac{\lambda^{\max(i,j)}}{6}\right),$$
\end{enumerate}
where $\eta\in \left\{[c\ \lambda]^\top \big|\,c\geq 0,0\leq\lambda\leq 1\right\}=:\mathbb{H}$ are the hyperparameters. These kernel designs have been shown effective both theoretically and numerically \cite{pillonetto2022regularized}.

The hyperparameters need to be selected before applying the estimator \eqref{eqn:gcl}. The most widely-used approach to hyperparameter selection is the maximum marginal likelihood method, which maximizes the probability of observing $\mathbf{y}$ given the inputs $\mathbf{u}$ and the hyperparameters $\eta$: $$\hat{\eta}=\text{arg}\underset{\eta}{\text{min}}\,-\log\,p(\mathbf{y}|\mathbf{u},\eta),$$ where 
\begin{equation}
\resizebox{\columnwidth}{!}{$
    p(\mathbf{y}|\mathbf{u},\eta)=\exp\left(-\frac{1}{2}\log\det\,\Psi(\eta) - \frac{1}{2}\,\mathbf{y}^\top \Psi^{-1}(\eta)\mathbf{y} + \text{const.}\right)$}
    \label{eqn:pyeta}
\end{equation}
and $\Psi=\sigma^2\mathbb{I}+\Phi K(\eta)\Phi^\top$. The estimated hyperparameters $\hat{\eta}$ are used, with certainty equivalence, to construct $K$ and then to obtain the estimate $\hat{\mathbf{g}}$.

\section{Error Bounds with Unknown Hyperparameters}

\subsection{Pitfalls with Error Bounds from Posterior Covariances}
\label{sec:3a}

The kernel-based method has shown remarkable performance in linear system identification, in terms of the nominal estimate \eqref{eqn:gcl}. However, the stochastic error bound \eqref{eqn:seb} assumes an exact prior distribution with true hyperparameters. In this work, the kernel structure is assumed to be correct with true hyperparameters $\eta_0$, but $\eta_0$ is unknown. When the estimated hyperparameters $\hat{\eta}$ are used, directly using \eqref{eqn:seb} to provide a stochastic model for safety-critical applications can be problematic.

To demonstrate this issue, consider two second-order systems
\begin{equation*}
    G_1(q)=\frac{0.4888}{q^2-1.8q+0.9^2},\ G_2(q)=\frac{0.0616}{q^2-q+0.9^2},
\end{equation*}
with two different noise levels $\sigma^2=0.1$ and 0.5. Both systems have two poles of magnitude 0.9: $G_1(q)$ has two real poles at 0.9; $G_2(q)$ has a pair of complex poles with a real part of 0.5. The systems have been normalized to have an $\mathcal{H}_2$-norm of 1.

Stochastic models given by the error bound \eqref{eqn:seb} are analyzed by 100 Monte Carlo simulations with TC kernels. Unit Gaussian inputs are used to generate the identification data. Figure~\ref{fig:1} shows the empirical probabilities of the error bounds containing the true impulse responses with $\delta=0.1$ and identification parameters $N=200$ and $n_g=50$. It can be seen that except for the case of $G_1(q)$ with low noise, the magnitudes of the errors are significantly underestimated in the other three cases, with the empirical probabilities much smaller than the target value of $1-\delta=0.9$. This indicates that the error bounds based on estimated hyperparameters are not reliable in cases where the impulse response is lightly damped and/or the signal-to-noise ratio is poor.
\begin{figure}
    \centering
    \includegraphics[width=0.9\linewidth]{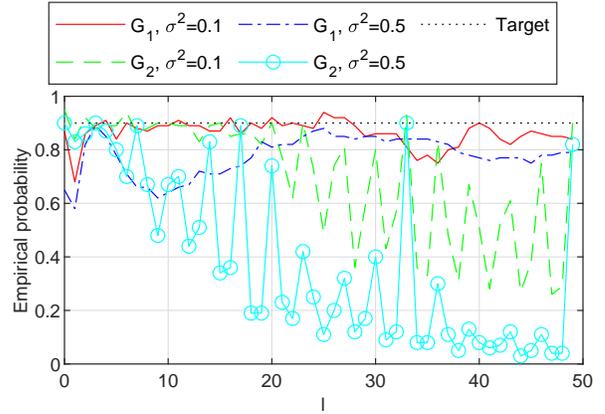}
    \vspace{-1em}
    \caption{Empirical probability of error bounds containing the true parameters using estimated hyperparameters.}
    \label{fig:1}
\end{figure}

To investigate the reason why the error bounds are inaccurate under these cases, Figure~\ref{fig:2} shows the marginal probability density $p(\mathbf{y}|\mathbf{u},\eta)$ \eqref{eqn:pyeta} with respect to the hyperparameters in one representative simulation. It can be seen that in cases (b), (c), and (d), where the error bounds based on estimated hyperparameters are inaccurate as shown in Figure~\ref{fig:1}, the marginal probability density is not strongly localized. This suggests that there could exist a large discrepancy between $\eta_0$ and $\hat{\eta}$, which leads to the misspecification of the error bounds.
\begin{figure}
    \centering
    \includegraphics[width=\linewidth]{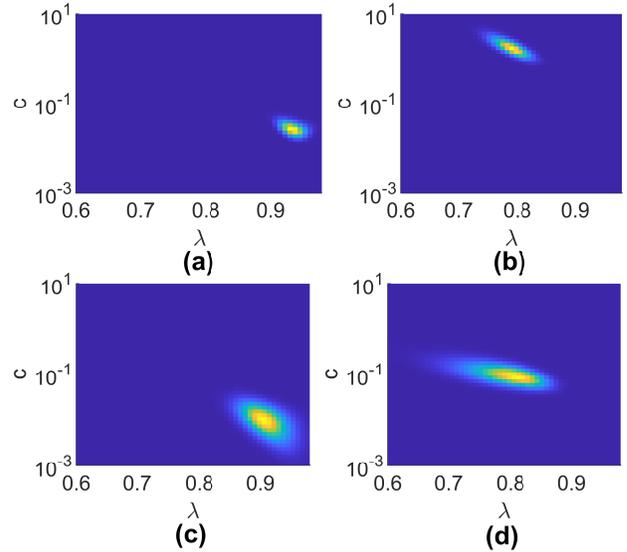}
    \caption{Marginal probability density with respect to hyperparameters. (a) $G_1,\sigma^2=0.1$, (b) $G_2,\sigma^2=0.1$, (c) $G_1,\sigma^2=0.5$, (d) $G_2,\sigma^2=0.5$.}
    \label{fig:2}
\end{figure}

\subsection{Worst-Case Posterior Variances}

To solve the problem of quantifying error bounds with unknown hyperparameters,
we first bound the true hyperparameters using the measured data. The distribution of hyperparameters conditioned on the measured data is given by
\begin{equation}
    p(\eta|\mathbf{u},\mathbf{y})=\frac{p(\mathbf{y}|\mathbf{u},\eta)p(\eta)}{\int_{\eta\in\mathbb{H}}p(\mathbf{y}|\mathbf{u},\eta)p(\eta)\,\text{d}\eta},
    \label{eqn:peta}
\end{equation}
where $p(\mathbf{y}|\mathbf{u},\eta)$ is given in \eqref{eqn:pyeta}. If prior information on the hyperparameters $\eta$ is available, it can be encoded as a hyperprior $p(\eta)$. Otherwise, $p(\eta)$ is chosen as a constant. This leads to
\begin{equation}
    \mathbb{P}\left(\eta_0\in[\eta_1,\eta_2]\right)=\frac{\int_{\eta\in[\eta_1,\eta_2]}p(\mathbf{y}|\mathbf{u},\eta)p(\eta)\,\text{d}\eta}{\int_{\eta\in\mathbb{H}}p(\mathbf{y}|\mathbf{u},\eta)p(\eta)\,\text{d}\eta}=:1-\delta',
    \label{eqn:bound}
\end{equation}
where $\eta_i=[c_i\ \lambda_i]^\top$, $i=1,2$ and 
$$[\eta_1,\eta_2]=\left\{\eta=[c\ \lambda]^\top\left|\,c_1\leq c\leq c_2,\lambda_1\leq \lambda\leq \lambda_2\right.\right\}$$
is a rectangular set. By choosing a small $\delta'$, \eqref{eqn:bound} establishes a high-probability set for the true hyperparameters. It is assumed that the set also contains the estimated hyperparameter, i.e., $\hat{\eta}\in[\eta_1,\eta_2]$.

Then, we investigate the effect of hyperparameters on the stochastic model, to find the worst-case posterior variances $\Sigma_{l,l}$ within the set $[\eta_1,\eta_2]$. For DI and TC kernels, a uniform bound is derived analytically using the following lemma.
\begin{lm}
The matrix inequality $\Sigma(\eta_1)\preccurlyeq \Sigma(\eta_2)$ is satisfied when $\left(\frac{\lambda_2}{\lambda_1}\right)^\gamma c_1\leq c_2$, $\lambda_1\leq\lambda_2$, with $\gamma=0$ for DI kernels and $\gamma=-1/\ln{\lambda_2}-1$ for TC kernels.
\label{lm:1}
\end{lm}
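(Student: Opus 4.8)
The plan is to peel off the data term and reduce the Loewner inequality $\Sigma(\eta_1)\preccurlyeq\Sigma(\eta_2)$ first to a comparison of the two kernel matrices and then to a scalar inequality in $\lambda_1,\lambda_2$. Since $\Sigma(\eta)=\sigma^2(\Phi^\top\Phi+\sigma^2K^{-1}(\eta))^{-1}$ and matrix inversion reverses the Loewner order on positive definite matrices, I would apply this antitonicity twice: $\Sigma(\eta_1)\preccurlyeq\Sigma(\eta_2)$ is equivalent to $\Phi^\top\Phi+\sigma^2K^{-1}(\eta_2)\preccurlyeq\Phi^\top\Phi+\sigma^2K^{-1}(\eta_1)$, hence to $K^{-1}(\eta_2)\preccurlyeq K^{-1}(\eta_1)$, hence to $K(\eta_1)\preccurlyeq K(\eta_2)$. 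The term $\Phi^\top\Phi$ cancels, so it suffices to establish $K(\eta_1)\preccurlyeq K(\eta_2)$ under the stated hypotheses. For the DI kernel this is immediate: $K^{\mathrm{DI}}(\eta)$ is diagonal with entries $c\lambda^i$, so the inequality holds iff $c_1\lambda_1^i\le c_2\lambda_2^i$ for every index $i\ge1$, and since $0\le\lambda_1\le\lambda_2$ gives $\lambda_1^i\le\lambda_2^i$, the single condition $c_1\le c_2$, i.e. $\gamma=0$, is sufficient.

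The TC case is the crux, and the key step I would use is the $\eta$-independent factorization $K^{\mathrm{TC}}(\eta)=c\,U\Lambda(\lambda)U^\top$, where $U$ is the fixed upper-triangular all-ones matrix ($U_{ik}=1$ for $k\ge i$ and $0$ otherwise) and $\Lambda(\lambda)=\mathrm{diag}(\lambda(1-\lambda),\dots,\lambda^{n_g-1}(1-\lambda),\lambda^{n_g})$. This is checked by the telescoping identity $(U\Lambda U^\top)_{ij}=\sum_{k\ge\max(i,j)}\Lambda_{kk}=\lambda^{\max(i,j)}$. Because $U$ is invertible and independent of $\eta$, a congruence transformation reduces $K(\eta_1)\preccurlyeq K(\eta_2)$ to the diagonal comparison $c_1\Lambda(\lambda_1)\preccurlyeq c_2\Lambda(\lambda_2)$, i.e. $c_1\lambda_1^k(1-\lambda_1)\le c_2\lambda_2^k(1-\lambda_2)$ for $k<n_g$ together with $c_1\lambda_1^{n_g}\le c_2\lambda_2^{n_g}$. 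Since $\lambda_1\le\lambda_2$ makes $(\lambda_1/\lambda_2)^k$ largest at the smallest index, the binding constraint is the one at $k=1$, namely $c_2/c_1\ge\lambda_1(1-\lambda_1)/(\lambda_2(1-\lambda_2))$, and all remaining constraints (including the last diagonal entry, using $1-\lambda_1\ge1-\lambda_2$) follow from it.

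It then remains to show that the hypothesis $(\lambda_2/\lambda_1)^\gamma c_1\le c_2$ with $\gamma=-1/\ln\lambda_2-1$ implies this binding constraint. Substituting $\gamma$ and cancelling the factor $\lambda_1/\lambda_2$ (which is exactly what the $-1$ term in $\gamma$ supplies), the claim reduces to the scalar inequality $(\lambda_2/\lambda_1)^{-1/\ln\lambda_2}\ge(1-\lambda_1)/(1-\lambda_2)$. Taking logarithms and setting $s=\ln\lambda$, this is a statement about the secant slope of $h(s)=\ln(1-e^s)$ over $[\ln\lambda_1,\ln\lambda_2]$: since $h$ is concave, with $h''(s)=-e^s/(1-e^s)^2<0$, the secant slope is at least the right-endpoint derivative $h'(\ln\lambda_2)=-\lambda_2/(1-\lambda_2)$, so it suffices that $\lambda_2/(1-\lambda_2)\le-1/\ln\lambda_2$. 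This final inequality is equivalent to $\phi(\lambda_2):=1-\lambda_2+\lambda_2\ln\lambda_2\ge0$, which holds on $(0,1]$ because $\phi(1)=0$ and $\phi'(\lambda)=\ln\lambda\le0$.

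I expect the main obstacle to be discovering the $\eta$-independent factorization $K^{\mathrm{TC}}(\eta)=c\,U\Lambda(\lambda)U^\top$: it is what collapses the matrix inequality into a family of scalar ones and isolates the single active constraint at the smallest index. Once that structure is in hand, the reduction via antitonicity is standard and the closing step is an elementary one-variable calculus inequality; the slightly conservative closed form of $\gamma$ (rather than the tight value $(2\lambda_2-1)/(1-\lambda_2)$) arises precisely from replacing the exact secant slope by the endpoint-derivative bound.
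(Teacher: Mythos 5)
Your proof is correct, and for the TC case it takes a genuinely different route from the paper. The paper characterizes positive semidefiniteness of the difference $K(\eta_2')-K(\eta_1)$ (which is again a $\max$-indexed matrix) by proving the determinant identity $\det M(\mathbf{m}_n)=m_n\prod_{i=1}^{n-1}(m_i-m_{i+1})$ by induction and invoking Sylvester's criterion, arriving at the monotonicity condition $\lambda_2^{i+\gamma}-\lambda_1^{i+\gamma}$ non-increasing in $i$, which it then verifies by showing $f(x)=\lambda_2^x-\lambda_1^x$ is non-increasing for $x\geq-1/\ln\lambda_2$. You instead use the $\eta$-independent congruence $K^{\mathrm{TC}}(\eta)=c\,U\Lambda(\lambda)U^\top$ to collapse the Loewner comparison to an exact, necessary-and-sufficient family of diagonal inequalities, isolate the binding one at $k=1$, and close with a concavity/secant-slope argument for $h(s)=\ln(1-e^s)$. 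The two routes meet at essentially the same elementary scalar fact ($1-\lambda+\lambda\ln\lambda\geq0$ for you, the monotonicity threshold $-1/\ln\lambda_2$ for the paper). What your version buys: it sidesteps the determinant induction entirely; it avoids the paper's slightly delicate use of Sylvester's criterion (leading principal minors being nonnegative characterizes positive \emph{definiteness}, not semidefiniteness, in general --- your congruence argument makes the semidefinite case airtight); and by first identifying the exact condition $c_2/c_1\geq\lambda_1(1-\lambda_1)/(\lambda_2(1-\lambda_2))$ it cleanly explains where the conservativeness of $\gamma=-1/\ln\lambda_2-1$ relative to the pointwise-tight $(2\lambda_2-1)/(1-\lambda_2)$ comes from. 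What the paper's version buys is the determinant formula \eqref{eqn:M} itself, which is reusable for other $\max$-indexed kernels. Your initial reduction to $K(\eta_1)\preccurlyeq K(\eta_2)$ via double antitonicity of inversion and cancellation of $\Phi^\top\Phi$ is the same as the paper's final step, and your DI argument matches the paper's (which simply calls it trivial).
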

\begin{proof}
See Appendix~\ref{sec:ap1}.
\end{proof}

From Lemma~\ref{lm:1}, we have
\begin{equation}
    \Sigma(\eta_0)\overset{1-\delta'}{\leq}\sigma^2 \left(\Phi^\top\Phi+\sigma^2 \left(\frac{\lambda_1}{\lambda_2}\right)^{\gamma} K^{-1}(\eta_2)\right)^{-1}=:\bar{\Sigma}.
    \label{eqn:sigl10}
\end{equation}
So the posterior variances with true hyperparameters $\eta_0$ can be uniformly bounded by
\begin{equation}
    \Sigma_{l,l}(\eta_0)\overset{1-\delta'}{\leq} \bar{\Sigma}_{l,l}=:\sigma_l^2.
    \label{eqn:sigl1}
\end{equation}
Since $\hat{\eta}\in[\eta_1,\eta_2]$, we also have $\Sigma_{l,l}(\hat{\eta})\leq\sigma_l^2$.

For a general kernel structure, the bound can be computed element-wise by solving the optimization problem:
\begin{equation}
    \sigma_l^2=\max_{\eta\in[\eta_1,\eta_2]}\ \Sigma_{l,l}(\eta).
    \label{eqn:sigl2}
\end{equation}

\subsection{Stochastic Error Bounds}
\label{sec:3c}

We are now ready to present the main result of the paper.
\begin{thm}
The impulse response estimate \eqref{eqn:gcl} with estimated hyperparameters $\hat{\eta}$ admits the following stochastic element-wise error bound:
\begin{equation}
    \mathbb{P}\left(\left|\hat{g}_l(\hat{\eta})-g_l\right|\leq \bar{\mu}\sigma_l\right)\geq (1-\delta)(1-\delta'),
\end{equation}
where $\bar{\mu}=\mu_\delta+\frac{2}{\sigma}\norm{\mathbf{y}}_S$ and $S=\Phi\left(\Phi^\top\Phi\right)^{-1}\Phi^\top$. Variables $\mu_\delta$, $\delta$, and $\delta'$ are given in \eqref{eqn:mud} and \eqref{eqn:bound}.
\label{thm:1}
\end{thm}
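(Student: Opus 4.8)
The plan is to split the total error through the true‑hyperparameter estimate,
\[
\hat g_l(\hat\eta)-g_l=\bigl(\hat g_l(\hat\eta)-\hat g_l(\eta_0)\bigr)+\bigl(\hat g_l(\eta_0)-g_l\bigr),
\]
controlling the second, stochastic term with the posterior bound \eqref{eqn:seb} evaluated at the \emph{true} hyperparameters $\eta_0$, and bounding the first, estimator‑discrepancy term deterministically (given the data) in terms of $\norm{\mathbf y}_S$. The high‑probability set \eqref{eqn:bound} then serves only to replace the unknown $\sqrt{\Sigma_{l,l}(\eta_0)}$ and $\sqrt{\Sigma_{l,l}(\hat\eta)}$ by the common worst‑case value $\sigma_l$.

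For the second term, \eqref{eqn:seb} with the correct prior gives $|\hat g_l(\eta_0)-g_l|\le\mu_\delta\sqrt{\Sigma_{l,l}(\eta_0)}$ with conditional probability at least $1-\delta$. I would then condition on the event $E=\{\eta_0\in[\eta_1,\eta_2]\}$, which by \eqref{eqn:bound} has probability $1-\delta'$; on $E$ the worst‑case construction \eqref{eqn:sigl1} (via Lemma~\ref{lm:1} for DI/TC kernels, or \eqref{eqn:sigl2} for a general kernel) yields $\Sigma_{l,l}(\eta_0)\le\sigma_l^2$, so this term is at most $\mu_\delta\sigma_l$. Because the estimator‑discrepancy bound below is deterministic on $E$ while the Gaussian fluctuation in \eqref{eqn:seb} holds with conditional probability $1-\delta$, the two randomness sources combine multiplicatively, producing the stated $(1-\delta)(1-\delta')$ rather than a looser union bound.

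For the first term I would route through the least‑squares estimate and use the triangle inequality,
\[
|\hat g_l(\hat\eta)-\hat g_l(\eta_0)|\le|\hat g_l(\hat\eta)-\hat g_l^{\mathrm{LS}}|+|\hat g_l(\eta_0)-\hat g_l^{\mathrm{LS}}|,
\]
reducing everything to the single deterministic estimate $|\hat g_l(\eta)-\hat g_l^{\mathrm{LS}}|\le\tfrac1\sigma\norm{\mathbf y}_S\sqrt{\Sigma_{l,l}(\eta)}$, valid for every $\eta$. Applying it at $\eta=\hat\eta$ (which lies in the set by assumption) and at $\eta=\eta_0$ (on $E$), and again replacing each $\sqrt{\Sigma_{l,l}(\cdot)}$ by $\sigma_l$, bounds the discrepancy by $\tfrac2\sigma\norm{\mathbf y}_S\,\sigma_l$; adding the two contributions gives exactly $\bar\mu\sigma_l$ with $\bar\mu=\mu_\delta+\tfrac2\sigma\norm{\mathbf y}_S$.

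The crux — and the step I expect to be hardest — is this deterministic per‑estimate inequality. I would first derive $\hat{\mathbf g}(\eta)-\hat{\mathbf g}^{\mathrm{LS}}=-\Sigma(\eta)K^{-1}(\eta)\hat{\mathbf g}^{\mathrm{LS}}$ from \eqref{eqn:gcl}, then use the identity $\Sigma K^{-1}=\sigma^2\bigl(K\Phi^\top\Phi+\sigma^2\mathbb I\bigr)^{-1}$ to write its $l$‑th component as $-\mathbf a^\top\mathbf y$ with $\mathbf a=\Phi(\Phi^\top\Phi)^{-1}K^{-1}\Sigma e_l\in\col(\Phi)$, where $e_l$ is the $l$‑th unit vector. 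Since $S$ is the orthogonal projector onto $\col(\Phi)$ and $S\mathbf a=\mathbf a$, Cauchy--Schwarz gives $|\mathbf a^\top\mathbf y|=|\mathbf a^\top S\mathbf y|\le\norm{\mathbf a}_2\,\norm{\mathbf y}_S$, so the question becomes how tightly $\norm{\mathbf a}_2^2=e_l^\top\Sigma K^{-1}(\Phi^\top\Phi)^{-1}K^{-1}\Sigma e_l$ can be controlled. The clean estimate I can see is $\Sigma K^{-1}(\Phi^\top\Phi)^{-1}K^{-1}\Sigma\preccurlyeq(\Phi^\top\Phi)^{-1}$, which after the congruence $B=(\Phi^\top\Phi)^{1/2}K(\Phi^\top\Phi)^{1/2}$ collapses to the elementary statement $\sigma^4(B+\sigma^2\mathbb I)^{-2}\preccurlyeq\mathbb I$ (true since $B\succcurlyeq 0$) and yields $\norm{\mathbf a}_2\le\tfrac1\sigma\sqrt{\Sigma^{\mathrm{LS}}_{l,l}}$. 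Sharpening this from $\Sigma^{\mathrm{LS}}_{l,l}$ down to the posterior variance $\Sigma_{l,l}(\eta)\le\sigma_l^2$ claimed in the statement is precisely where the tightest attainable constant must be extracted, and is the delicate point — the same reduction that, done carefully, should be what admits the factor $2$ and improves the constant of \cite{pmlr-v162-capone22a}.
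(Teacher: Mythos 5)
Your overall architecture matches the paper's: the same decomposition through $\hat g_l(\eta_0)$, the same use of \eqref{eqn:seb} at the true hyperparameters, the same replacement of $\Sigma_{l,l}(\eta_0)$ and $\Sigma_{l,l}(\hat\eta)$ by $\sigma_l^2$ via Lemma~\ref{lm:1} or \eqref{eqn:sigl2}, and the same multiplicative combination of the two probability levels. The gap is in how you control $\left|\hat g_l(\hat\eta)-\hat g_l(\eta_0)\right|$. You route it through $\hat g^{\mathrm{LS}}_l$ and need the deterministic inequality $\left|\hat g_l(\eta)-\hat g^{\mathrm{LS}}_l\right|\leq\frac{1}{\sigma}\norm{\mathbf{y}}_S\sqrt{\Sigma_{l,l}(\eta)}$. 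What your own computation actually delivers is the weaker bound with $\Sigma^{\mathrm{LS}}_{l,l}$ in place of $\Sigma_{l,l}(\eta)$ (that part is correct: $\sigma^4(B+\sigma^2\mathbb{I})^{-2}\preccurlyeq\mathbb{I}$), and the sharpening you defer to ``the delicate point'' is not merely delicate --- it is false. Scalar counterexample: for $n_g=1$, $\Phi^\top\Phi=p$, $K=k$, one has $\hat g(\eta)-\hat g^{\mathrm{LS}}=-\frac{\sigma^2/k}{p+\sigma^2/k}\,\hat g^{\mathrm{LS}}$ while $\frac{1}{\sigma}\sqrt{\Sigma(\eta)}\norm{\mathbf{y}}_S=\sqrt{\frac{p}{p+\sigma^2/k}}\,\bigl|\hat g^{\mathrm{LS}}\bigr|$, and the ratio of the former to the latter is $\frac{\sigma^2/k}{\sqrt{p(p+\sigma^2/k)}}$, which exceeds $1$ whenever $\sigma^2/k\gg p$. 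This is exactly the regime of the tail entries of a stable kernel, where $K_{i,i}=c\lambda^{i}$ is tiny: under heavy shrinkage $\hat g_l(\eta)\approx 0$, so the deviation from the least-squares estimate is of order $\sqrt{\Sigma^{\mathrm{LS}}_{l,l}}$, not $\sqrt{\Sigma_{l,l}(\eta)}$ --- the gap between those two scales is precisely what regularization creates. Consequently your argument only proves $\left|\hat g_l(\hat\eta)-g_l\right|\leq\mu_\delta\sigma_l+\frac{2}{\sigma}\norm{\mathbf{y}}_S\sqrt{\Sigma^{\mathrm{LS}}_{l,l}}$, which is strictly weaker than the claimed $\bar\mu\sigma_l$ since $\sigma_l^2=\bar\Sigma_{l,l}\leq\Sigma^{\mathrm{LS}}_{l,l}$.

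The fix is to not subtract the least-squares estimate at all. The paper bounds each regularized estimate directly: by Cauchy--Schwarz (equivalently, the reproducing property of the posterior kernel), $\left|\hat g_l(\eta)\right|^2\leq\Sigma_{l,l}(\eta)\,\hat{\mathbf{g}}^\top(\eta)\Sigma^{-1}(\eta)\hat{\mathbf{g}}(\eta)$, and $\hat{\mathbf{g}}^\top\Sigma^{-1}\hat{\mathbf{g}}=\frac{1}{\sigma^2}\mathbf{y}^\top\Phi\left(\Phi^\top\Phi+\sigma^2K^{-1}\right)^{-1}\Phi^\top\mathbf{y}\leq\frac{1}{\sigma^2}\norm{\mathbf{y}}_S^2$ because $\left(\Phi^\top\Phi+\sigma^2K^{-1}\right)^{-1}\preccurlyeq\left(\Phi^\top\Phi\right)^{-1}$. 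Then $\left|\hat g_l(\hat\eta)-\hat g_l(\eta_0)\right|\leq\left|\hat g_l(\hat\eta)\right|+\left|\hat g_l(\eta_0)\right|\leq\frac{2\sigma_l}{\sigma}\norm{\mathbf{y}}_S$ carries the correct posterior-variance scaling, and the factor $2$ and the constant $\bar\mu$ follow. Your observation that the relevant vector lies in $\col{\Phi}$ so that the projector $S$ absorbs it is exactly the right mechanism for producing the $\norm{\mathbf{y}}_S$ factor; it just has to be applied to $\hat{\mathbf{g}}(\eta)$ itself rather than to $\hat{\mathbf{g}}(\eta)-\hat{\mathbf{g}}^{\mathrm{LS}}$.
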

\begin{proof}
The estimation error is decomposed as
\begin{align}
\left|\hat{g}_l(\hat{\eta})-g_l\right|&\,\,\leq\ \left|\hat{g}_l(\hat{\eta})-\hat{g}_l(\eta_0)\right|+\left|\hat{g}_l(\eta_0)-g_l\right|\\
&\overset{1-\delta}{\leq}\left|\hat{g}_l(\hat{\eta})-\hat{g}_l(\eta_0)\right|+\mu_\delta\sqrt{\Sigma_{l,l}(\eta_0)},\label{eqn:bound0}
\end{align}
where the two terms are due to misspecified hyperparameters and measurement noise respectively.

Define the posterior kernel
\begin{equation*}
    \resizebox{\columnwidth}{!}{$k^p_\eta(x,x')=k_\eta(x,x')-\mathbf{k}_x(\eta)\left(K(\eta)+\sigma^2\left(\Phi^\top\Phi\right)^{-1}\right)^{-1}\mathbf{k}_x(\eta)^\top$}.
\end{equation*}
Note that $k^p_\eta(i,j)=\Sigma_{i,j}(\eta)$. The associated RKHS is denoted as $\mathcal{H}^p_\eta$. It is easy to see that $g^\star_\eta(\cdot)\in\mathcal{H}^p_\eta$ and $\norm{g^\star_\eta(\cdot)}_{\mathcal{H}^p_\eta}^2=\hat{\mathbf{g}}^\top(\eta)\Sigma^{-1}(\eta)\hat{\mathbf{g}}(\eta)$. Note the reproducing property of the RKHS $g^\star_\eta(x)=\langle g^\star_\eta(\cdot), k^p_\eta(\cdot,x)\rangle_{\mathcal{H}^p_\eta}$, where $\langle\cdot,\cdot\rangle_{\mathcal{H}^p_\eta}$ denotes the inner product in ${\mathcal{H}^p_\eta}$. From the Cauchy–Schwarz inequality, we have $\left|g^\star_\eta(x)\right|\leq k^p_\eta(x,x)^{\frac{1}{2}}\norm{g^\star_\eta(\cdot)}_{\mathcal{H}^p_\eta}$. This leads to
\begin{equation}
\begin{split}
    \!\!\!\left|\hat{g}_l(\eta)\right|^2&\leq\Sigma_{l,l}(\eta)\hat{\mathbf{g}}^\top(\eta)\Sigma^{-1}(\eta)\hat{\mathbf{g}}(\eta)\\
    &=\frac{1}{\sigma^2}\Sigma_{l,l}(\eta)\mathbf{y}^\top\Phi\left(\Phi^\top\Phi+\sigma^2 K^{-1}(\eta)\right)^{-1}\!\!\Phi^\top \mathbf{y}\\
    &\leq\frac{\Sigma_{l,l}(\eta)}{\sigma^2}\norm{\mathbf{y}}_S^2.    
\end{split}
\label{eqn:bound2}
\end{equation}
So we have $\left|\hat{g}_l(\hat{\eta})\right|^2\leq \frac{\sigma_l^2}{\sigma^2}\norm{\mathbf{y}}_S^2$ and $\left|\hat{g}_l(\eta_0)\right|^2\overset{1-\delta'}{\leq} \frac{\sigma_l^2}{\sigma^2}\norm{\mathbf{y}}_S^2$. Then,
\begin{align}
    \left|\hat{g}_l(\hat{\eta})-\hat{g}_l(\eta_0)\right|\leq \left|\hat{g}_l(\hat{\eta})\right|+ \left|\hat{g}_l(\eta_0)\right|\overset{1-\delta'}{\leq} \frac{2\sigma_l}{\sigma}\norm{\mathbf{y}}_S.
    \label{eqn:bound1}
\end{align}

From \eqref{eqn:bound}, \eqref{eqn:sigl1}, \eqref{eqn:sigl2}, we have $\mu_\delta\sqrt{\Sigma_{l,l}(\eta_0)}\overset{1-\delta'}{\leq}\mu_\delta\sigma_l$. This, together with \eqref{eqn:bound0} and \eqref{eqn:bound1}, proves Theorem~\ref{thm:1}.
\end{proof}
\begin{rmk}
For DI and TC kernels, by modifying the last inequality in \eqref{eqn:bound2}, the bound in Theorem~\ref{thm:1} can be tightened by choosing $S=\Phi\left(\Phi^\top\Phi+\sigma^2 \left(\frac{\lambda_1}{\lambda_2}\right)^{\gamma}K^{-1}(\eta_2)\right)^{-1}\Phi^\top$.
\end{rmk}

\subsection{Selecting the Set of Hyperparameters}
Theorem~1 holds for any choices of $\eta_1,\eta_2$ that satisfy \eqref{eqn:bound}. To obtain the tightest bound, $\eta_1,\eta_2$ can be selected by optimization. For DI and TC kernels, the total magnitude of the bounds $\sum_{l=0}^{n_g-1}\bar{\mu}\sigma_l$ can be minimized. From \eqref{eqn:sigl10} and \eqref{eqn:sigl1}, this is equivalent to solving
\begin{align}
    \underset{\eta_1,\eta_2}{\text{min}}&\quad\left(\tfrac{\lambda_2}{\lambda_1}\right)^{\gamma}\text{tr}\left(K(\eta_2)\right)\\
    \text{s.t.}&\quad\frac{\int_{\eta\in[\eta_1,\eta_2]}p(\mathbf{y}|\mathbf{u},\eta)p(\eta)\,\text{d}\eta}{\int_{\eta\in\mathbb{H}}p(\mathbf{y}|\mathbf{u},\eta)p(\eta)\,\text{d}\eta}\geq 1-\delta'\label{eqn:conprob}
\end{align}
For a general kernel structure with element-wise bound \eqref{eqn:sigl2}, $\eta_1,\eta_2$ can be selected individually for each $l$ by solving the minimax problem:
\begin{align}
    \sigma_l^2=\underset{\eta_1,\eta_2}{\text{min}}\ \underset{\eta\in[\eta_1,\eta_2]}{\text{max}}\ \Sigma_{l,l}(\eta)\quad\text{s.t.}\ \ \eqref{eqn:conprob}.
    \label{eqn:etaopt}
\end{align}

\section{Numerical Results}
\label{sec:4}

The proposed bound is verified numerically by considering the same examples as in Section~\ref{sec:3a}. The error bound \eqref{eqn:seb} with estimated hyperparameters analyzed in Section~\ref{sec:3a} is termed the \textit{vanilla kernel bound}, whereas the proposed bound in Section~\ref{sec:3c} is called the \textit{robust kernel bound}. The \textit{least-squares bound} \eqref{eqn:lsb} is also compared.

For computational efficiency, the optimization problems to find $\eta_1,\eta_2$ are solved by discretizing $\eta$. The nominal estimate and the estimated hyperparameters are obtained by \texttt{impulseest} in \textsc{Matlab}. The inner problem in \eqref{eqn:etaopt} is solved by \texttt{fmincon} in \textsc{Matlab}. For the robust kernel bound, we select $\delta'=0.1$ and $\bar{\mu}=\mu_\delta$,  since the theoretical constant in Theorem~\ref{thm:1} is too conservative in practice. Such conservativeness is often observed in GP error bounds, so a much smaller scaling factor is often selected in practical applications \cite{NIPS2017_766ebcd5,Umlauft_2017}.

Figure~\ref{fig:3} presents a comparison of the performance of different error bounds with a TC kernel design. For each example, the left figure shows representative identification results in one simulation, whereas the right figure shows the empirical probability of error bounds containing the true parameters from 100 Monte Carlo simulations. The results show that the proposed robust kernel bounds are more conservative compared to the vanilla kernel bounds, especially under high noise, but they are much more reliable with much higher empirical probabilities of containing the true parameters. On the other hand, the robust kernel bounds are still much tighter than the least-squares bounds.

\begin{figure*}
    \centering
    \includegraphics[width=0.9\linewidth]{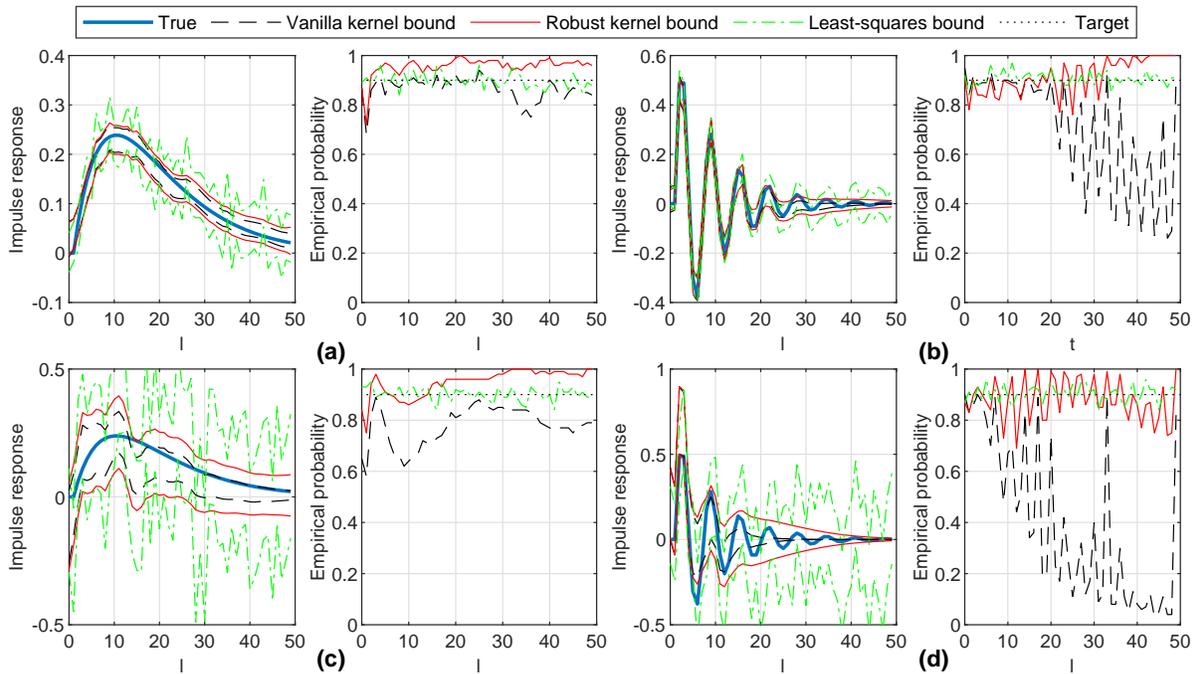}
    \caption{Comparison of different error bounds with TC kernels. (a) $G_1,\sigma^2=0.1$, (b) $G_2,\sigma^2=0.1$, (c) $G_1,\sigma^2=0.5$, (d) $G_2,\sigma^2=0.5$. Left: representative identification results, right: the empirical probability of error bounds containing the true parameters.}
    \label{fig:3}
\end{figure*}

Figure~\ref{fig:4} shows the empirical probability with a SS kernel design. The robust kernel bounds are derived by selecting $\sigma_l$ from \eqref{eqn:etaopt}. Similar results to the TC kernel case are obtained, where the robust kernel bounds are much more reliable than the vanilla kernel bounds.

\begin{figure}
    \centering
    \includegraphics[width=0.9\linewidth]{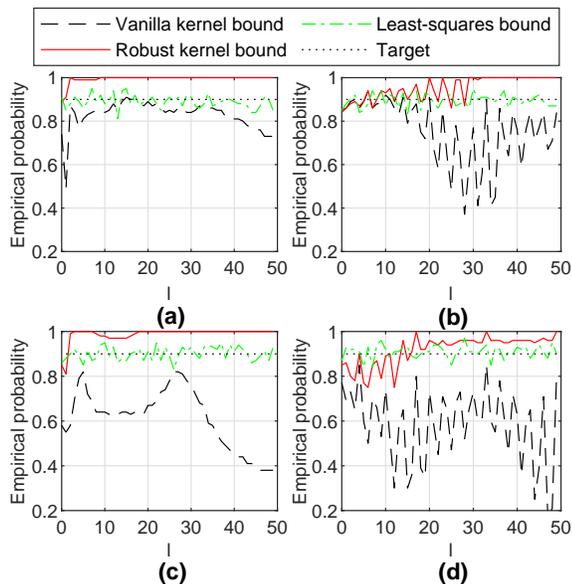}
    \vspace{-1em}
    \caption{Empirical probability of error bounds containing the true parameters with SS kernels. (a) $G_1,\sigma^2=0.1$, (b) $G_2,\sigma^2=0.1$, (c) $G_1,\sigma^2=0.5$, (d) $G_2,\sigma^2=0.5$. }
    \label{fig:4}
\end{figure}

\section{Conclusions}
In this work, we investigate the problem of quantifying the estimation error in kernel-based linear system identification with unknown hyperparameters. First, it is illustrated that the certainty equivalence principle does not work here: error bounds constructed using the estimated hyperparameters are too optimistic in multiple examples. Instead, a rectangular set of hyperparameters is constructed to contain the true ones with high probability. The error bounds can then be obtained by scaling the worst-case posterior variances within the set. It is shown both theoretically and numerically that the proposed bound is accurate in specifying the estimation error.

This work provides a practical approach to obtaining a reliable stochastic model centered around the nominal estimate of kernel-based system identification. Further research directions include deriving uniform posterior covariance bounds for other kernel structures and improving the constant $\bar{\mu}$ in Theorem~\ref{thm:1}.

\appendices

\section{Proof of Lemma~\ref{lm:1}}
\label{sec:ap1}
The result is trivial for DI kernels. For TC kernels, define $M(\mathbf{m}_n)\in\mathbb{R}^{n\times n}$, $\mathbf{m}_n=\left[m_1\ m_2\ \dots\ m_n\right]^\top$ with $M_{i,j}(\mathbf{m}_n)=m_{\max(i,j)}$. We first prove that
\begin{equation}
    \det\,M(\mathbf{m}_n)=m_n\prod_{i=1}^{n-1}\left(m_i-m_{i+1}\right).
    \label{eqn:M}
\end{equation}

For $n=1,2$, $\det\,M(\mathbf{m}_1)=m_1$, $\det\,M(\mathbf{m}_2)=m_2(m_1-m_2)$ satisfy \eqref{eqn:M}. Suppose \eqref{eqn:M} is satisfied for $n=l-1,l$. Define
$$\mathbf{m}_{l\setminus i}=\left[m_1\ \dots\ m_{i-1}\ m_{i+1}\ \dots\ m_l\right]^\top.$$
From the definition of the determinant, we have
$
    \det\,M(\mathbf{m}_l)=m_l\sum_{i=1}^{l}(-1)^{l-i}\det\,M(\mathbf{m}_{l\setminus i})
$.

For $n=l+1$,
\begin{equation*}
\resizebox{\columnwidth}{!}{%
$\begin{aligned}&\det\,M(\mathbf{m}_{l+1})=m_{l+1}\sum_{i=1}^{l+1}(-1)^{l+1-i}\det\,M(\mathbf{m}_{l+1\setminus i})\\
    =&\ m_{l+1}\left(\det\,M(\mathbf{m}_l)+\sum_{i=1}^{l}(-1)^{l+1-i}\det\,M(\mathbf{m}_{l+1\setminus i})\right)\\
    =&\ m_{l+1}\left(\det\,M(\mathbf{m}_l)-\frac{m_{l+1}}{m_l}(m_{l}-m_{l+1})\sum_{i=1}^{l-1}(-1)^{l-i}\det\,M(\mathbf{m}_{l\setminus i})\right.\\
    &\qquad\qquad\qquad\qquad\qquad\left.-\frac{m_{l+1}}{m_{l-1}}(m_{l-1}-m_{l+1})\det\,M(\mathbf{m}_{l-1})\right)\\
    =&\ m_{l+1}\left(\det\,M(\mathbf{m}_l)-\frac{m_{l+1}}{m_l}(m_{l}-m_{l+1})\det\,M(\mathbf{m}_l)\right.\\
    &\qquad\qquad\qquad\qquad\qquad\left.-\frac{m_{l+1}^2(m_{l-1}-m_l)}{m_l m_{l-1}}\det\,M(\mathbf{m}_{l-1})\right)\\
    =&\ m_{l+1}\left(1-\frac{m_{l+1}(m_{l}-m_{l+1})}{m_l}-\frac{m_{l+1}^2}{m_l^2}\right)\det\,M(\mathbf{m}_l)\\
    =&\ m_{l+1}\prod_{i=1}^{l}\left(m_i-m_{i+1}\right)\end{aligned}$}
\end{equation*}
satisfies \eqref{eqn:M}. This proves \eqref{eqn:M} by induction.

Using Sylvester's criterion, $M(\mathbf{m}_n)$ is positive semidefinite iff $\det\,M(\mathbf{m}_l)\geq 0, \forall\,l=1,\dots,n$. This requires
\begin{equation}
    m_i-m_{i+1}\geq 0, \forall\,i=1,\dots,n-1.
    \label{eqn:mii}
\end{equation}
Define $\eta'_2=\left[\left(\frac{\lambda_2}{\lambda_1}\right)^\gamma c_1\ \lambda_2\right]^\top$. Since $\left(\frac{\lambda_2}{\lambda_1}\right)^\gamma c_1\leq c_2$, we have $K(\eta_2)\succcurlyeq K(\eta'_2)$. Define $M(\mathbf{m}_{n_g})=K(\eta'_2)-K(\eta_1)$ by choosing $m_i=\left(\frac{\lambda_2}{\lambda_1}\right)^\gamma c_1\lambda_2^i-c_1\lambda_1^i$. So $K(\eta'_2)-K(\eta_1)\succcurlyeq 0$ is equivalent to
\begin{equation}
\lambda_2^{1+\gamma}-\lambda_1^{1+\gamma}\geq\lambda_2^{2+\gamma}-\lambda_1^{2+\gamma}\geq\dots\geq\lambda_2^{n_g+\gamma}-\lambda_1^{n_g+\gamma}.
\label{eqn:lambda}
\end{equation}
Note that $f(x)=\lambda_2^x-\lambda_1^x$ is monotonically non-increasing for $x\geq -1/\ln{\lambda_2}$, $\forall\,\lambda_2\geq\lambda_1$. This indicates that \eqref{eqn:lambda} is satisfied for $\gamma\geq -1/\ln{\lambda_2}-1$. Therefore, $K(\eta_2)\succcurlyeq K(\eta'_2)\succcurlyeq K(\eta_1)$ for $\gamma=-1/\ln{\lambda_2}-1$, which leads to
$$\left(\Phi^\top\Phi+\sigma^2 K^{-1}(\eta_2)\right)^{-1}\succcurlyeq \left(\Phi^\top\Phi+\sigma^2 K^{-1}(\eta_1)\right)^{-1}.$$
This directly proves Lemma~\ref{lm:1}.


\bibliographystyle{IEEEtran}
\bibliography{refs}

\end{document}